\documentclass[11pt,a4paper]{article}
\usepackage[utf8]{inputenc}
\usepackage[english]{babel}
\usepackage[T1]{fontenc}
\usepackage[dvipsnames]{xcolor}
\usepackage[unicode=true, colorlinks = true, allcolors=MidnightBlue]{hyperref}
\usepackage{mathtools}
\usepackage{amsfonts}
\usepackage{amssymb}
\usepackage{amsthm}
\usepackage{mathrsfs}

\usepackage{natbib}
\setcitestyle{citesep={,}}

\usepackage{float}

\usepackage[inline]{enumitem}
\usepackage{graphicx}
\usepackage[margin=1in]{geometry}
\usepackage{marginnote}

\usepackage{nicefrac}
\usepackage{setspace}
\usepackage{thmtools}

\usepackage{cancel}

\usepackage{cleveref}

\usepackage{shellesc}
\usepackage{pgfplots}
\pgfplotsset{compat=1.14}

\newtheorem{theorem}{Theorem}
\newtheorem{defi}{Definition}
\newtheorem{propo}{Proposition}

\theoremstyle{definition}

\newtheorem{ex}{Example}

\theoremstyle{remark}
\newtheorem{remark}{Remark}

\newcommand{\A}{{\mathcal{A}}}
\newcommand{\R}{\ensuremath{\mathbb{R} }}

\newcommand{\E}{\ensuremath{\mathbb{E} }}
\newcommand{\Pro}{\ensuremath{\mathbb{P}}}
\newcommand{\Li}{\ensuremath{L^\infty} }
\newcommand{\az}{\ensuremath{\{\A_Z: Z \in \A_\p\}} }
\newcommand{\Lp}{\ensuremath{L^p} }

\newcommand{\p}{\ensuremath{\rho }}
\newcommand{\X}{\ensuremath{{\Li} }}

\newcommand{\Ap}{{\mathcal{A}_\p}}
\newcommand{\B}{{\mathcal{B}}}
\newcommand{\cl}{\operatorname{cl}}
\newcommand{\pa}{{\p_\A}}

\DeclareMathOperator*{\esssup}{ess\,sup}

\title{On the link between monetary and star-shaped risk measures}
\author{Marlon Moresco \\ email: \href{mailto:marlonmoresco@hotmail.com}{marlonmoresco@hotmail.com} 
 \and Marcelo Brutti Righi  \\ email: \href{marcelo.righi@ufrgs.br}{marcelo.righi@ufrgs.br}  }
\begin{document}
\maketitle

\onehalfspacing
\begin{abstract}
Recently, \cite{castagnoli21} introduce the class of star-shaped risk measures as a generalization of convex and coherent ones, proving that there is a representation as the pointwise minimum of some family composed by convex risk measures. Concomitantly, \cite{jia21} prove a similar representation result for monetary risk measures, which are more general than star-shaped ones. Then, there is a question on how both classes are connected. In this letter, we provide an answer by casting light on the importance  of the acceptability of  $0$, which is linked to the property of normalization. We then show that under mild conditions, a monetary risk measure is only a translation away from star-shapedness.

\noindent \textbf{Keywords}: Risk measures, Star-shapedness,  Acceptance sets, Representations, Convex risk measures.
\end{abstract} 
 
\section{Introduction}

Since the seminal paper of \citep{artzner99}, axiomatic theory for risk measures has been receiving increasing attention. This literature categorizes each risk measure in a class depending on its axioms. The first of such classes to appear was the class of coherent risk measures. Soon after, in critique to positive homogeneity, \cite{follmer02}, and \cite{frittelli02} proposed the class of convex risk measures. New classes are constantly emerging in increasing diversity and generality, such as the class of law invariant, comonotone, and comonotonic convex risk measures. See \cite{kusuoka01,frittelli05,jouini06} and \cite{song09}. One of the broader classes is the monetary risk measures. See \cite{follmer16} for a review. 

Recently, \cite{castagnoli21} propose the class of star-shaped risk measures, which can be understood as a generalization for positive homogeneity and convexity as well as the star-shaped property of the generated acceptance set. This class is a significant generalization because it allows for the most used non-convex risk measure, the Value at Risk (VaR), to be in the same class as the convex risk measures. The authors found that a normalized monetary risk measure is star-shaped (see \Cref{defins}) if and only if it can be represented as the minimum of normalized convex risk measures. \cite{Liebrich2021} explores the role of star-shaped risk measures for inf-convolution problems of risk sharing.

Nonetheless, \cite{jia21} show that any monetary risk measure is the minimum of convex risk  measures, without any mention of star-shapedness. In fact, they found that a functional is a monetary risk measure if and only if it is the minimum of all comonotone convex risk measures that dominate it. Thus, there seems to be a missing link between the two approaches. Since both take a minimum of convex risk measures, but one obtains a monetary risk measure and the other a star-shaped risk measure, a direct question would be if any monetary risk measure is also star-shaped, perhaps under normalization.

In this letter, we show that the answer to this question is no, even the missing link being the acceptability of $0$ for the risk measure, which is implied by  normalization. In this sense, we make clear the link between the value for risk measures at $0$ and star-shapedness. Furthermore, we show that monetary risk measures are only at a translation away from star-shapedness.  Formally, we have the following claim:
\\
\\
\textit{Let $\p$ be a monetary risk measure. Then  $\p-c$ is star-shaped for some $c\in\R$  if and only if \[\p(X)=\min_{ \Lambda}(X)= \p_\lambda(X),\:\forall\:X,\] where $\Lambda$ is a family of convex risk measures  such that $\sup_{ \Lambda} \p_\lambda (0) < \infty$. In this case, we can take any $c\geq\sup_{ \Lambda} \p_\lambda (0)$.
}
\\
\\
Thus, the link between star-shaped and monetary risk measures  is the representation of the translated risk measure as pointwise minimum of some family of convex risk measures, under the mild condition that the pointwise supremum of such family at $0$ is finite.

In all that follows, $(\Omega,\mathfrak{F}, \Pro)$ is a fixed underlying probability space.
Every equality and inequality is to be understood as holding $\Pro$-almost surely. 
We work on $L^\infty := L^\infty(\Omega,\mathfrak{F},\Pro)$, ``the set of all ($\Pro$-equivalence classes of) random variables $X$ which are $\Pro$-essentially bounded''. $\E$ is the expectation under $\Pro$, and $1_{A}$ is the indicator function of event $A$. The remainder of this letter is structured as follows: section \ref{back} exposes a brief background on risk measures, with special attention to star-shapedness; section \ref{results} presents our main results, which are the role played by the acceptability of $0$ and the link between monetary and star-shaped risk measures.

\section{Background}\label{back}

We are interested in functionals that measure risk. In this sense, we expose below a formal definition of risk measures and the properties we shall consider in this letter.

\begin{defi}\label{defins}
A functional $\p: \X \rightarrow \R\cup \{-\infty,\infty\}$  may fulfill the following:
 
\begin{enumerate}[label = (\roman*)]

\item ({Monotonicity}) $\p$ is {monotone} if $ Y \leq X$ implies $\p(Y) \geq \p(X)$ for every pair $X,Y\in\X$ . 
\item ({Translation invariance}) $\p$ is {translation invariant} if $\p (X + c) = \p (X) - c$ for any $X\in\X$ and $c \in \R$.

\item (Normalization) $\p$ is normalized if $\p(0) =0$.
\item ({Convexity}) $\p$ is {convex} if $\p(\lambda X + (1- \lambda)Y) \leq \lambda \p(X) + (1- \lambda ) \p(Y)$, for every pair $X,Y\in\X$ and all $\lambda \in [0,1]$.
\item ({Positive homogeneity}) $\p$ is {positive homogeneous} if $\p(\lambda X) = \lambda  \p(X)$ for all $X\in\X$ and $\lambda \geq 0$.

\item (Star-shapedness) $\p$ is star-shaped if $\p(\lambda X ) \geq \lambda \p(X)$ for all $X \in \X$ and $\lambda \geq 1 $.


\end{enumerate}
We say the function is risk measure if it is not identically $\infty$ and does not attain $-\infty$. A monetary risk measure  satisfies translation invariance and monotonicity. A convex risk measure is a convex monetary risk measure, and a coherent risk measure is a positive homogeneous convex  risk measure. We say a risk measure is star-shaped  if it fulfills star-shapedness and normalized if it possesses normalization. 
\end{defi}

Star-shapedness is a generalization of positive homogeneity, which in its turn implies normalization. It is equivalent to  $\p(\lambda X ) \leq \lambda \p(X)$ for $0 < \lambda < 1 $. Thus, it is also implied by convexity if we have in addition $\p(0) \leq 0$, which is the case for normalized risk measures. For the intuition on most properties above, we recommend \cite{follmer16} and references therein. The financial meaning of star-shapedness is well discussed in \cite{castagnoli21}.

Any risk measure has a set of acceptable positions, the so-called acceptance set. In the same vein, one can take the acceptance set as the primary object and define a risk measure through it. 
Once the acceptance set is taken as the primary object, it is natural to care about its properties. We now define such concepts.

\begin{defi}
Given a risk measure $\p$,  its acceptance set is defined as \[\Ap:= \{ X \in \X : \p(X) \leq 0  \}.\] Additionally, given an acceptance set $\A \subseteq \X$, its induced risk measure $\pa$ is given by \[ \pa(X) = \inf \{ m \in \R : X + m \in \A\},\:\forall\:X\in L^\infty.\]

An acceptance set $\A \subseteq \X$ may have the following properties:
\begin{enumerate}[label=(\roman*)]
\item (Monotonicity) $\A$ is monotone if $X \in \A$ and $X\leq Y$, $Y \in \X$ implies $Y \in  \A$.

\item (Monetarity) $\A$ is monetary if is monotone and  $\inf \{ m \in \R: m \in \A \}  > -\infty$.

\item (Normalization) $\A$ is normalized if is monotone and  $\min \{ m \in \R: m \in \A \} = 0$.

\item (Convexity)  $ \A$ is convex if $X$ and $Y \in  \A$ implies $\lambda X +(1-\lambda) Y \in  \A$. 

\item (Conicity) $ \A$ is a cone if $X \in  \A$ implies $\lambda X \in  \A$ for all $\lambda \geq 0 $.

\item (Star-shapedness)  $ \A$ is a star-shaped at $\mathcal{V}$ if  $X \in \A$ implies  for all $v \in \mathcal{V}$  that $\lambda X + (1-\lambda)v \in  \A$ for all $\lambda \in (0,1)$. Unless otherwise specified, star-shapedness is to be understood as star-shapedness at $0$.

\end{enumerate}
\end{defi}

  We are actually introducing the nomenclature of normalized and monetary acceptance set. The intuition behind those properties can be found in \cite{follmer16}. Normalization was used in \cite{jia21}, but not under this appellation. It is worth noting that while star-shapedness is a relaxation of conicity, it is also implied by convexity under $0 \in \cl\A$, where $\cl$ means closure with respect to the essential supremum norm.

\begin{remark}
We use a broader definition of star-shapedness, as used in \cite{penot05} for example, because we use it as a tool. In the risk measures literature, only star-shapedness at $0$ is considered. The definition found in \cite{castagnoli21} is that $S$ is star-shaped if and only if $\lambda s \in S$ for all $s \in S$ and $\lambda \in [0,1]$.
 We opt to use the open interval $(0,1)$ for $\lambda$ instead of the closed $[0,1]$ since our sets do not necessarily contain $0$. Of course, $\lambda =1$ is always covered. Note that both definitions coincide for closed sets. Moreover, since we are on $\Li$, all monetary risk measures are Lipschitz continuous, which implies the acceptance sets $\Ap$ are closed. Our definition of star-shaped at $0$ implies that $0 \in \cl \A$ and both definitions are  equivalent if $0 \in \A$. 
\end{remark}

The following result shows the interplay between any risk measure $\p$ with its acceptance set $\A_\p$, as well as the interplay between any acceptance set $\A$ and its induced risk measure $\pa$.

\begin{propo}\label{basic props} (Propositions 2.1 and 2.2 of \citet{artzner99}, Propositions 4.6 and 4.7 of \citet{follmer16} and Proposition 2 of \cite{castagnoli21}). Given an acceptance set $\A \subseteq \X$ and a risk measure $\p : \X \rightarrow \R$, we have the following interplay.
\begin{enumerate}[label=(\roman*)]
\item (Monotonicity)  If $\p$ is monotone, then $\A_\p$ is also monotone. If $ \A$ is monotone, then $\pa $ is also monotone.

\item (Monetarity)
 If $\p$ is a monetary risk measure, then $\A_\p$ is monetary and  $\p = \p_{\A_\p}$. If $ \A$ is monetary, then $\pa$ is monetary and the closure of $ \A$ is equal to $\A_\pa $. 

\item (Normalization)
 If $\p$ is a normalized monetary risk measure, then $\A_\p$ is normalized.
If $ \A$ is normalized, then $\pa$ is a normalized monetary risk measure.

\item (Convexity) 
 If $\p$ is convex, then $\A_\p$ is convex. If $ \A$ is convex, then $\pa $ is convex. 

\item (Positive Homogeneity) If $\p$ is positive homogeneous, then $\A_\p$ is a cone. If  $ \A$ is a cone, then  $\pa $ is positive homogeneous. 

\item (Star-shapedness) 
 If $\p$ is star-shaped, then $\A_\p$ is star-shaped. If  $ \A$ is star-shaped, then  $\pa $ is star-shaped.

\end{enumerate}
\end{propo}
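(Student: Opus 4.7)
My plan is to handle each of the six items separately, always working directly from the two governing definitions
\[
\A_\p = \{X \in \X : \p(X) \leq 0\}, \qquad \pa(X) = \inf\{m \in \R : X + m \in \A\},
\]
together with translation invariance of $\p$ or the corresponding translational behaviour of the infimum. The structure of every item is the same: the $\p \Rightarrow \A_\p$ direction is read off by substituting the defining inequality $\p(\cdot) \leq 0$; the $\A \Rightarrow \pa$ direction is proved by taking arbitrary $m_1, m_2$ with $X + m_1, Y + m_2 \in \A$, applying the relevant stability property of $\A$ to a combination of $X+m_1$ and $Y+m_2$, and passing to the infimum.

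For (i), $Y \leq X$ and $\p$ monotone force $\p(Y) \geq \p(X)$, so acceptability is preserved upward; conversely, monotonicity of $\A$ gives $\{m : Y+m \in \A\} \subseteq \{m : X+m \in \A\}$ and hence $\pa(X) \leq \pa(Y)$. For (iii), translation invariance of $\p$ reduces $m \in \A_\p$ to $m \geq \p(0)$, so $\p(0) = 0$ yields $\min\{m \in \R : m \in \A_\p\} = 0$; conversely $\pa(0) = \inf\{m : m \in \A\} = 0$ when $\A$ is normalized. For (iv), convex combinations of $X,Y \in \A_\p$ keep $\p \leq 0$ by Jensen; conversely, if $X+m_1, Y+m_2 \in \A$, convexity of $\A$ puts $\lambda X + (1-\lambda)Y + \lambda m_1 + (1-\lambda)m_2$ in $\A$, and taking the infimum over $m_1,m_2$ gives convexity of $\pa$. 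Items (v) and (vi) follow the same template: for star-shapedness, if $X + m \in \A$ and $\mu \in (0,1)$, then star-shapedness of $\A$ at $0$ yields $\mu(X+m) = \mu X + \mu m \in \A$, so $\pa(\mu X) \leq \mu m$, and infimizing over $m$ gives $\pa(\mu X) \leq \mu \pa(X)$, which is equivalent to the stated inequality $\pa(\lambda X) \geq \lambda \pa(X)$ for $\lambda \geq 1$.

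The main obstacle is item (ii), both because it demands checking that $\pa$ really is a risk measure (neither identically $\infty$ nor attaining $-\infty$) and because it includes the identity $\cl \A = \A_\pa$. The finiteness bound I would establish is $m_0 - \|X\|_\infty \leq \pa(X) \leq m_0 + \|X\|_\infty$, where $m_0 := \inf\{m : m \in \A\} > -\infty$ by the monetarity of $\A$: the upper bound uses monotonicity to push $X + (m_0 + \|X\|_\infty)$ above some constant in $\A$, and the lower bound uses the contrapositive of monotonicity to forbid $X + m$ from lying in $\A$ when $m < m_0 - \|X\|_\infty$. The identity $\p = \p_{\A_\p}$ is then a one-line consequence of translation invariance. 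For the closure identity, one inclusion uses that $\pa$ is $1$-Lipschitz in the essential supremum norm (a direct consequence of monotonicity and translation invariance), so $\A_\pa$ is closed and contains $\A$; for the reverse, $\pa(X) \leq 0$ provides $m_n \downarrow \pa(X)$ with $X + m_n \in \A$, and $X + m_n \to X + \pa(X) \leq X$ in $\Li$, which combined with monotonicity of $\A$ places $X$ in $\cl \A$. Everything else reduces to the definitional manipulations outlined above.
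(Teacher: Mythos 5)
Your proof is correct; note that the paper itself offers no proof of this proposition, stating it as a collection of known results from \citet{artzner99}, \citet{follmer16} and \cite{castagnoli21}, and your self-contained argument follows exactly the standard route used in those references (definitional substitution for the $\p \Rightarrow \A_\p$ directions, infimization over acceptable translates for the $\A \Rightarrow \pa$ directions, the two-sided bound $m_0 - \|X\|_\infty \leq \pa(X) \leq m_0 + \|X\|_\infty$ for finiteness, and $1$-Lipschitz continuity for $\cl\A = \A_{\pa}$). Two cosmetic remarks: in (iv) the forward direction is just the definition of convexity of $\p$, not Jensen's inequality; and in the last step of (ii) it is cleanest to first conclude $X + \pa(X) \in \cl\A$ from norm convergence and then use that the closure of a monotone set is monotone to lift $X + \pa(X) \leq X$ to $X \in \cl\A$.
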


\begin{remark}
To take the closure of a monetary acceptance set does not alter the resulting risk measure. By the last Proposition we have $\cl \A = \A_\pa$ and $\pa = \p_{\A_\pa} = \p_{\cl A}$. Importantly, $\cl A = \A_\pa$ holds for the closure under the supremum norm, while this reasoning is not necessarily true for more general $\Lp$ spaces. For such spaces requiring the monetary risk measure to be lower semicontinuous in the natural $p$-norm is enough to maintain the results, see \cite{follmer16} for instance.	
\end{remark}

Regarding representations, \cite{jia21} show that any monetary risk measure is the pointwise minimum of a family of convex risk measures. In its turn, \cite{castagnoli21} show that any monetary normalized star-shaped risk measure is the pointwise minimum of normalized convex risk measures. Both theorems conclude that a positive homogeneous monetary risk measure is the pointwise minimum of a family of coherent risk measures. We now expose both results under our notation.

\begin{theorem}\label{theojia}(Theorems 3.1 and 3.4 of \cite{jia21}) A  functional  $\p : \X \rightarrow \R$ is a monetary risk measure if and only if \[\rho (X) = \min_{ \Lambda} \p_\lambda (X),\:\forall\:X\in\X, \]
	where $\Lambda$ is a set of convex monetary risk measures. Furthermore, $\p$ is positive homogeneous if and only if $\Lambda$  contains only coherent risk measures.
\end{theorem}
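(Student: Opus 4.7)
The plan is to prove each equivalence via the ``supporting majorant at every point'' strategy: for every $Y \in \X$, construct a risk measure of the appropriate type (convex, resp.\ coherent) that dominates $\p$ globally and coincides with $\p$ at $Y$, so that $\p = \min_{Y \in \X}\p_Y$ with the minimum attained at $Y = X$. The easy direction handles itself: if $\p = \min_{\lambda \in \Lambda}\p_\lambda$ with each $\p_\lambda$ convex monetary, then monotonicity and translation invariance pass through the pointwise minimum directly, so $\p$ is monetary; if moreover each $\p_\lambda$ is coherent, then $\p_\lambda(aX) = a\p_\lambda(X)$ for $a \geq 0$, so $\min_\lambda \p_\lambda(aX) = a\min_\lambda \p_\lambda(X)$, giving positive homogeneity of $\p$.

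For the converse in the convex case, my candidate is
\[ \p_Y(X) := \p(Y) + \esssup(Y - X), \quad Y \in \X. \]
Translation invariance, monotonicity, and convexity of $\p_Y$ are routine from the structure of $\esssup(Y - \cdot)$ (sublinearity composed with an affine map), so $\p_Y$ is a convex monetary risk measure and trivially $\p_Y(Y) = \p(Y)$. For domination, the a.s.\ inequality $Y \leq X + \esssup(Y - X)$ combined with monotonicity and translation invariance of $\p$ gives $\p(Y) \geq \p(X) - \esssup(Y - X)$, which rearranges to $\p_Y(X) \geq \p(X)$. Hence $\p(X) = \min_{Y \in \X}\p_Y(X)$, with the minimum attained at $Y = X$.

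For the ``furthermore'' part, the above $\p_Y$ is only convex, not coherent, because the additive constant $\p(Y)$ breaks positive homogeneity in $X$; a different construction is required. The approach I would attempt is to exploit that PH makes $\p$ linear on each ray: $\p(tY + c) = t\p(Y) - c$ for $t \geq 0$, $c \in \R$. Starting from this linear functional on $\mathrm{span}\{Y, 1\}$, I would seek a sublinear extension to all of $\X$ still dominating $\p$, using as an upper bound the worst-case coherent measure $X \mapsto \esssup(-X)$, which dominates every normalized monetary risk measure (so in particular every PH one, since PH forces $\p(0) = 0$). Such an extension corresponds to a coherent risk measure $\p_Y$ with $\p_Y \geq \p$ and $\p_Y(Y) = \p(Y)$. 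The hard part will be producing the extension itself: classical Hahn-Banach yields extensions bounded above by a sublinear majorant, but here the extension must also remain bounded below by the \emph{non-sublinear} $\p$, which likely requires a dual/selection argument on the closed convex cone generated by $\A_\p$, or equivalently a careful choice of supporting half-spaces of $\A_\p$ passing through $Y + \p(Y)$.
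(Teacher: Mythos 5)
This statement is quoted from \citet{jia21} (their Theorems 3.1 and 3.4); the paper you are working from gives no proof of it, so there is nothing internal to compare against. Judged on its own terms, the first two thirds of your argument are correct and complete. The easy direction is fine, and your convex construction $\p_Y(X) = \p(Y) + \esssup(Y - X)$ is exactly the standard one: it is a convex monetary risk measure, it dominates $\p$ by the chain $Y \leq X + \esssup(Y-X)$, monotonicity, and translation invariance, and it touches $\p$ at $Y$, so $\p = \min_{Y\in\X}\p_Y$ with the minimum attained at $Y = X$. This is essentially the argument in the cited source.

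The genuine gap is the ``furthermore'' direction for positively homogeneous $\p$, which you sketch but do not prove, and the Hahn--Banach route you propose is both harder than necessary and pointed at the wrong target: a coherent risk measure is \emph{sublinear}, not linear, so you do not need a linear extension squeezed between $\p$ and $\esssup(-\,\cdot\,)$ --- you only need a sublinear majorant of $\p$ that touches it at $Y$, and such an object can be written down directly on the acceptance-set side. Concretely: since $\p$ is monetary and positively homogeneous, $\A_\p$ is a closed monotone cone with $\p(0)=0$. Fix $Y$ and set $Z := Y + \p(Y) \in \A_\p$. Define $\A^Y := \{X \in \X : X \geq \lambda Z \text{ for some } \lambda \geq 0\}$, the monotone hull of the ray through $Z$. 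This set is a convex monotone cone (if $X_i \geq \lambda_i Z$ then $\alpha X_1 + (1-\alpha)X_2 \geq (\alpha\lambda_1 + (1-\alpha)\lambda_2)Z$), it is contained in $\A_\p$ because $\A_\p$ is a monotone cone containing $Z$, and it is monetary: $\esssup Z \geq 0$ (otherwise monotonicity of $\A_\p$ would put a strictly negative constant in $\A_\p$, contradicting $\p(0)=0$), hence $\inf\{m \in \R : m \in \A^Y\} = 0$. By the interplay results, $\p_{\A^Y}$ is coherent; the inclusion $\A^Y \subseteq \A_\p$ gives $\p_{\A^Y} \geq \p_{\A_\p} = \p$, and $Z \in \A^Y$ gives $\p_{\A^Y}(Y) \leq \p(Y)$, hence equality at $Y$. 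Taking $\Lambda = \{\p_{\A^Y} : Y \in \X\}$ finishes the claim. Without some such replacement, your proof of the coherent half remains an announced strategy rather than a proof.
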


\begin{theorem}(Theorem 5 of \cite{castagnoli21})\label{teocast}  $\p : \X \rightarrow \R$ is a star-shaped normalized monetary risk measure if and only if \[\rho (X) = \min_{\Lambda} \p_\lambda (X),\:\forall\:X\in\X, \]
	where $\Lambda$ is a set of  normalized convex risk measures. Furthermore, $\p$ is positive homogeneous if and only if $\Lambda$  contains only coherent risk measures.
\end{theorem}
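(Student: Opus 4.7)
The plan is to treat the two implications separately: the representation side inherits monotonicity, translation invariance, normalization, and star-shapedness from the family by simple manipulations, while for the converse I would build normalized convex risk measures from below by ``slicing'' the star-shaped acceptance set along segments from the origin to each candidate point.

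For the easy direction, assume $\p(X) = \min_{\lambda \in \Lambda} \p_\lambda(X)$ with each $\p_\lambda$ a normalized convex risk measure. Monotonicity and translation invariance pass to the pointwise minimum immediately, and $\p(0) = \min_\lambda \p_\lambda(0) = 0$. For star-shapedness, recall from the remarks after \Cref{defins} that normalized convex risk measures are themselves star-shaped; hence for $t \geq 1$ and every $\lambda$ one has $\p_\lambda(tX) \geq t\p_\lambda(X) \geq t\p(X)$, and taking the minimum over $\lambda$ gives $\p(tX) \geq t\p(X)$. The coherent refinement is analogous: if each $\p_\lambda$ is positive homogeneous then $\p(tX) = \min_\lambda t\p_\lambda(X) = t\p(X)$ for every $t \geq 0$.

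For the converse, assume $\p$ is a star-shaped, normalized, monetary risk measure. By \Cref{basic props}, the acceptance set $\A := \A_\p$ is monotone, normalized, and star-shaped at $0$. For each $Y \in \A$ with $\p(Y) = 0$, define
\[\A_Y := \{Z \in \X : Z \geq tY \text{ for some } t \in [0,1]\}.\]
The key steps are: (a) $\A_Y$ is monotone and convex by inspection, since convex combinations of $t$'s in $[0,1]$ remain in $[0,1]$; (b) $\A_Y \subseteq \A$, because $tY \in \A$ by star-shapedness at $0$ and then $Z \geq tY$ yields $Z \in \A$ by monotonicity of $\A$; and (c) $\A_Y$ is normalized. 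Step (c) is the subtle one and reduces to $\esssup(Y) \geq 0$: otherwise, with $c := -\esssup(Y) > 0$, monotonicity would give $\p(Y+c) \geq \p(0) = 0$, while translation invariance gives $\p(Y+c) = \p(Y) - c = -c < 0$, a contradiction. Setting $\p_Y := \p_{\A_Y}$, \Cref{basic props} yields that $\p_Y$ is a normalized convex monetary risk measure with $\p_Y \geq \p$ pointwise (from $\A_Y \subseteq \A$), while $Y \in \A_Y$ (take $t=1$) gives $\p_Y(Y) \leq 0 = \p(Y)$, hence $\p_Y(Y) = \p(Y)$. For arbitrary $X$, taking $Y := X + \p(X)$ so that $\p(Y) = 0$, translation invariance then yields $\p_Y(X) = \p_Y(Y) + \p(X) = \p(X)$, and $\Lambda := \{\p_Y : Y \in \A,\ \p(Y) = 0\}$ realizes $\p = \min_{\lambda \in \Lambda} \p_\lambda$.

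For the positive-homogeneous refinement, I would replace $t \in [0,1]$ by $t \geq 0$ in the definition of $\A_Y$: since $\A$ is now a cone, containment in $\A$ still holds and each $\A_Y$ is a monotone, convex, normalized cone, so the induced $\p_Y$ is coherent. I expect the main obstacle to be the normalization step (c); the argument hinges on combining normalization of $\p$ with translation invariance to show that $\p(Y) = 0$ forces $\esssup(Y) \geq 0$, which is precisely what prevents negative constants from sneaking into $\A_Y$ and breaking normalization of the induced risk measure.
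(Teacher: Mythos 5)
Your proof is correct. Note first that the paper itself gives no proof of this statement: it is imported verbatim as Theorem 5 of \cite{castagnoli21}, so there is no internal argument to compare against line by line. Your easy direction is the standard one (pointwise minima preserve monotonicity and translation invariance, normalization is immediate, and star-shapedness follows because each normalized convex $\p_\lambda$ satisfies $\p_\lambda(tX)\geq t\p_\lambda(X)\geq t\p(X)$ for $t\geq 1$, exactly as the remark after \Cref{defins} licenses). Your converse is also sound, and it is worth pointing out that your building block $\A_Y=\{Z: Z\geq tY \text{ for some } t\in[0,1]\}$ coincides with $\operatorname{conv}\bigl(\{X: X\geq Y\}\cup\{X: X\geq 0\}\bigr)$, i.e.\ it is precisely the construction the paper deploys later in \Cref{propo from star} specialized to star-shapedness at $0$; you have in effect re-derived the cited theorem by the same mechanism the authors use for its translated generalization. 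The one step you rightly flag as delicate, normalization of $\A_Y$ via $\esssup Y\geq 0$, is handled correctly: $\p(Y)=0$ together with monotonicity and translation invariance rules out $\esssup Y<0$, so the constants in $\A_Y$ are exactly $[0,\infty)$. Two trivial points you may wish to make explicit: the case $t=0$ in step (b) uses $0\in\A_\p$, which comes from normalization of $\p$ rather than from star-shapedness (the paper's star-shapedness of sets is defined with $\lambda\in(0,1)$); and the identity $\p=\p_{\A_\p}$ for monetary $\p$ (Proposition \ref{basic props}(ii)) is what lets you pass from $\A_Y\subseteq\A_\p$ to $\p_Y\geq\p$.
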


\begin{remark}
	Both studies provide many representations closely related to the dual conjugate of convex risk measures or as the lower envelop of a family of risk measures. In fact, if $\p$ is a monetary (normalized star-shaped) risk measure, one can take $\Lambda= \{ h(X) : h \text{ is a (normalized) convex risk measure and } h \geq \p \}$. Due to parsimony, we focus only on the main representations we expose since our results are also valid for all the remaining ones. Moreover, by the direct interplay of \Cref{basic props}, we take the liberty to move between $\p_\lambda$ and $\A_\lambda$ in $\Lambda$ without further mention.
\end{remark}

\section{Main Results}\label{results}

In this section, we will compare the main results of \cite{jia21} and \cite{castagnoli21}, with particular emphasis on the acceptance of $0$. The two theorems are strikingly similar. The only difference is that the minimum is star-shaped under normalization. The role of normalization is essential to obtain star-shapedness because the minimum of a family of convex risk measures is not necessarily star-shaped. Furthermore, under mild conditions, a monetary risk measure is only a translation away from star-shapedness. The next example illustrates this fact.

\begin{ex}
Let $\p_\lambda (X) := \p (X) + f(\lambda),\:\lambda\in\mathbb{R}$, where $\p:\X \rightarrow \R $ is  a convex risk measure and $f: \R \rightarrow \R_+$ satisfies $\min_{\lambda \in \R} f(\lambda) = f(\epsilon) > -\p(0) $ and $\lim_{\lambda \rightarrow \infty} f(\lambda) = \infty$. Note that $\p_\lambda$ is not normalized, while it is clearly a convex risk measure. We claim that $\min_{\lambda \in \R} \p_\lambda $ is not star-shaped. First of all, note that $\min_{\lambda \in \R} \p_\lambda (X) = \p(X) + \min_{\lambda \in \R} f(\lambda)  = \p_\epsilon (X) $. Moreover, for any $k >1$ and constant $X$, we have
 \[\p_\epsilon (k X)- k\p_\epsilon ( X) =  \p(0) -kX -k (\p(0) -X) + f( \epsilon ) (1-k) = \left( f(\epsilon) +\p(0)\right) (1-k) <0. \]
Hence, there is $X\in\X$ such that $\p_\epsilon (k X) <  k \p_\epsilon (X)$ for $k>1$, which implies that $\p_\epsilon$ is not a star-shaped risk measure. Moreover, if $\rho$ has in addition positive homogeneity, then the claim is valid for any $X\in L^\infty$ since $f(\epsilon)>-\p(0)=0$ implies \[\p_\epsilon (k X)- k\p_\epsilon ( X) =  \p(kX) -k\p(X) +(1-k) f( \epsilon )  = f(\epsilon) (1-k) <0. \] However, note that while this risk measure is not star-shaped it is only a translation away from it. In fact, the functional \[\p^* (X) := \min_{\lambda \in \R}\p_\lambda(X) - \min_{\lambda \in \R}\p_\lambda(0) =\p(X)-\p(0)\] is normalized and convex, which implies it is star-shaped. For a special case let $\p(X) = \E[-X]$ and $f(\lambda) := \lambda 1_{\lambda \geq \epsilon} + \epsilon 1_{\lambda < \epsilon}$  for some $\epsilon >0$. Then, for $\lambda \geq \epsilon$ we obtain $\p_\lambda (X) =  E[-X] + \lambda$ and \[\min_{\lambda \in \R} \p_\lambda (X)= \min_{\lambda \geq \epsilon} \p_\lambda (X) =\p_\epsilon(X),\] which is not star-shaped. 
\end{ex}

The role played by normalization becomes more evident when we look from the  perspective of acceptance sets. In fact, it is a sufficient condition for star-shapedness.

\begin{propo}
Consider a family $\Lambda$ of monetary convex normalized acceptance sets and $\A := \cup_{ \Lambda} \A_\lambda$. Then $\p_A$ is monetary normalized star-shaped risk measure. 
\end{propo}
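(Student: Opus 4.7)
The plan is to verify each required property of $\p_\A$ in turn by first establishing the corresponding property at the level of the acceptance set $\A$ and then invoking the interplay summarized in \Cref{basic props}. Because each $\A_\lambda$ is normalized, we have $0\in\A_\lambda$ and $\{m\in\R:m\in\A_\lambda\}=\R_+$ by monotonicity; the same characterization then transfers to the union, giving $\{m\in\R:m\in\A\}=\R_+$ and in particular $\min\{m\in\R:m\in\A\}=0$. Monotonicity of $\A$ is immediate: if $X\in\A$ and $X\leq Y$, pick $\lambda$ with $X\in\A_\lambda$; by monotonicity of $\A_\lambda$ we get $Y\in\A_\lambda\subseteq\A$. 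Hence $\A$ is a monetary and, in fact, normalized acceptance set.

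Next I would show that $\A$ is star-shaped at $0$. Fix $X\in\A$ and $\mu\in(0,1)$. There is some $\lambda$ with $X\in\A_\lambda$, and since $\A_\lambda$ is convex and contains $0$, the point $\mu X+(1-\mu)\,0=\mu X$ belongs to $\A_\lambda$, hence to $\A$. (The case $\mu=1$ is trivial.) This is exactly star-shapedness at $0$ as given in the definition.

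Finally, applying the (ii), (iii) and (vi) items of \Cref{basic props} to $\A$ yields that $\p_\A$ is a monetary, normalized, star-shaped risk measure, as claimed. I do not foresee any genuine obstacle here; the only mild subtlety is to observe that a union of convex sets need not be convex, so one really does need the common point $0\in\bigcap_\Lambda\A_\lambda$ provided by normalization in order to transfer star-shapedness from each $\A_\lambda$ to $\A$. This is precisely the role played by normalization that the paper wishes to emphasize.
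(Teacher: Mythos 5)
Your proof is correct and follows essentially the same route as the paper: show that the union $\A$ inherits monotonicity and normalization from its members, deduce star-shapedness of $\A$ at $0$ from convexity of each $\A_\lambda$ together with $0\in\A_\lambda$, and then transfer these properties to $\p_\A$ via \Cref{basic props}. The only difference is that you unpack the paper's appeal to ``star-shapedness is preserved under union'' into an explicit pointwise verification, which is a harmless (and arguably clearer) elaboration of the same argument.
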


\begin{proof}
 We have by \Cref{basic props}  that $\pa$ is star-shaped (convex) if and only if $\A$ is star-shaped (convex). Furthermore, for a family $\Lambda$ of normalized acceptance sets it holds that $\A := \cup_{\Lambda} \A_\lambda$ is also normalized and $\pa (X) = \inf_{ \Lambda} \p_{\A_\lambda} (X)$. We have that star-shapedness is preserved under union and convexity implies star-shapedness as long as the set contains $0$ in its closure, which is implied by normalization. Thus, if each $\A_\lambda$ is normalized and convex, then $ \A$ is star-shaped. Hence, by \Cref{basic props} we get that $\pa$ is also star-shaped.
\end{proof}

\begin{remark}
While normalization gives us sufficient conditions to generate star-shaped risk measures, it is not necessary. In fact, instead of demanding for each $\A_\lambda$ to be normalized we could simply ask them to contain $0$, which is equivalent to $ \p_{ \A_{ \lambda}}  (0) = \inf  \{ m \in \R : m \in \A_\lambda \} \leq 0$. This, together with convexity is enough to guarantee that $\pa$ is star-shaped.  Star-shapedness in its turn implies $\p(0)  \leq 0$ since for any $\lambda \in (0,1)$ we have $\p(0) = \p( \lambda 0) \leq  \lambda \p(0)$. Moreover, even if convexity is relaxed to star-shapedness, the result of this Proposition would hold. Hence, the previous Proposition indicates that while \cite{castagnoli21} demanded in their work that all risk measures be normalized, their result would remain true if it was replaced by the weaker assumption $\p(0) \leq 0$. 
\end{remark}

 We now provide a necessary condition for star shapedness of monetary risk measures.

\begin{propo}
If a monetary risk measure $\rho$ is star-shaped, then there exists a family $\Lambda$ of convex risk measures with at least one member  star-shaped such that \[\rho (X) = \min_{\Lambda} \p_\lambda (X),\:\forall\:X\in\X.\]	Moreover, $\Lambda$ can be taken as a family of comonotone convex risk measures.
\end{propo}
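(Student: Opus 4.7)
The plan is to invoke the envelope representation in Theorem~\ref{theojia} and then simply exhibit one explicit star-shaped member inside the dominating family. Concretely, I would take $\Lambda$ to be the collection of all comonotone convex risk measures $h$ with $h \geq \p$. By Theorem~\ref{theojia} and the remark following it, this yields $\p(X) = \min_{h \in \Lambda} h(X)$ for every $X\in\X$. All that remains is to display a single element of $\Lambda$ that is star-shaped.

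For this, I would consider the worst-case functional $h^{*}(X) := -\operatorname{ess\,inf} X$. A routine check shows that $h^{*}$ is a coherent risk measure (monotone, translation invariant, positively homogeneous, and convex) with $h^{*}(0)=0$. It is also comonotone additive, because $\operatorname{ess\,inf}(X+Y) = \operatorname{ess\,inf} X + \operatorname{ess\,inf} Y$ whenever $X$ and $Y$ are comonotone, so $h^{*}$ belongs to the comonotone convex class. Being coherent, $h^{*}$ is in particular positively homogeneous and hence star-shaped.

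To verify that $h^{*}$ really lies in $\Lambda$, i.e.\ that $h^{*}\geq \p$, I would use that star-shapedness of $\p$ forces $\p(0)\leq 0$, as already noted in the remark after the first proposition of Section~\ref{results}. For any $X\in\X$, monotonicity gives $\p(X) \leq \p(\operatorname{ess\,inf} X)$ since $X\geq \operatorname{ess\,inf} X$; translation invariance then yields
\[ \p(\operatorname{ess\,inf} X) \;=\; \p(0) - \operatorname{ess\,inf} X \;\leq\; -\operatorname{ess\,inf} X \;=\; h^{*}(X), \]
where the inequality uses $\p(0)\leq 0$. Hence $h^{*} \in \Lambda$, so $\Lambda$ is a family of comonotone convex risk measures whose pointwise minimum equals $\p$ and which contains a star-shaped (indeed coherent) member, delivering both the main claim and the ``moreover'' part with the same family.

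The only substantive step is the inequality $h^{*}\geq \p$, whose entire content is the implication ``star-shaped $\Rightarrow \p(0)\leq 0$''; everything else is bookkeeping on top of the representation theorems already cited in the paper. I therefore do not expect a real obstacle, only the need to invoke the comonotone version of Theorem~\ref{theojia} so that both conclusions of the proposition follow from the same $\Lambda$.
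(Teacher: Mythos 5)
Your proof is correct, but it establishes the key point---that the representing family contains a star-shaped member---by a genuinely different device than the paper. The paper also starts from the Jia--Xia--Zhao representation (and cites their Theorem 6.1 for the comonotone version), but then argues intrinsically: since the minimum is attained at $X=0$, there is some $\p_{\lambda^*}\in\Lambda$ with $\p_{\lambda^*}(0)=\p(0)\leq 0$, and a convex monetary risk measure with nonpositive value at $0$ is automatically star-shaped. You instead exhibit an explicit witness, the worst-case coherent risk measure $h^{*}(X)=-\esssup(-X)$, verify it is comonotone additive and dominates $\p$ (the domination again reducing to $\p(0)\leq 0$), and conclude from positive homogeneity. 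Both arguments hinge on the same fact, namely that star-shapedness forces $\p(0)\leq 0$. The paper's version is slightly more flexible, since it works for \emph{any} representing family over which the minimum is attained, without needing to know that a particular functional belongs to it; yours is more constructive, producing a concrete coherent (hence star-shaped) member, but it commits you to taking $\Lambda$ to be the full family of dominating comonotone convex risk measures so that $h^{*}$ is guaranteed to lie in it---which, as you note, means invoking the comonotone form of the representation (Theorem 6.1 of the cited work) rather than \Cref{theojia} as literally stated. With that citation made explicit, your argument is complete.
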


\begin{proof}
By \Cref{theojia}, as $\p$ is monetary there is a representation $\p (X) = \min_{ \Lambda} \p_\lambda (X),\:\forall\:X\in\X,$ under convex risk measures $\Lambda$. We only need to show that $\Lambda$ contains a star-shaped risk measure. This is equivalent to show  for one $\rho_\lambda \in \Lambda$  that $\rho_\lambda (0) \leq 0$.  As $\p$ is star-shaped we have that $0 \geq \p(0) =  \min_{ \Lambda} \p_\lambda (0) =  \p_{\lambda^*} (0)  $ for some $ \p_{\lambda^*} \in \Lambda$. Hence, $\p_{\lambda^*}$ is star-shaped. The last claim is a direct consequence of Theorem 6.1 in \cite{jia21}. 
\end{proof}

The subsequent Propositions show that any monetary risk measure is just at  a translation away from star-shapedness under the correct choice of acceptance sets.

\begin{propo}\label{prop1}
Let $\Lambda$ be a family of convex monetary acceptance sets, $\A:= \cup_{\Lambda} \A_\lambda$ and $\B:= \cl( \cap_{ \Lambda} \A_\lambda)$. Then  for any $Y \in B$, $\p_Y (X) := \pa (X +Y)$ is a star-shaped monetary risk measure. In particular, $\p(X) := \pa(X) - c $ is a star-shaped monetary risk measure for any $c\geq \p_\B (0)$.
\end{propo}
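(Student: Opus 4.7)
The plan is to route everything through the acceptance set of $\p_Y$ and apply \Cref{basic props} twice. First I would verify that $\p_Y$ is a monetary risk measure: monotonicity is inherited from $\pa$, and translation invariance follows from $\p_Y(X+d)=\pa(X+d+Y)=\pa(X+Y)-d$. Then \Cref{basic props}(ii) gives $\p_Y=\p_{\A_{\p_Y}}$ with
\[
\A_{\p_Y}=\{X\in\X: \pa(X+Y)\leq 0\}=\{X\in\X: X+Y\in\cl\A\}=\cl\A-Y.
\]

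The heart of the argument is to show that $\A_{\p_Y}$ is star-shaped at $0$, since then \Cref{basic props}(vi) transfers star-shapedness to $\p_Y$. I would fix $X\in\A_{\p_Y}$ and $\mu\in(0,1)$ and aim for $\mu X+Y\in\cl\A$. Picking $Z_n\in\A$ with $Z_n\to X+Y$, each $Z_n$ lies in some $\A_{\lambda_n}\in\Lambda$. Since intersections of closed sets are closed, $\B=\cl(\cap_\Lambda \A_\lambda)\subseteq \cap_\Lambda \cl\A_\lambda$, so $Y\in\cl\A_{\lambda_n}$ for every $n$ and there exist $Y_{n,k}\in\A_{\lambda_n}$ with $Y_{n,k}\to Y$. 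Convexity of $\A_{\lambda_n}$ then yields $\mu Z_n+(1-\mu)Y_{n,k}\in\A_{\lambda_n}\subseteq\A$, and a diagonal extraction produces a sequence in $\A$ converging to $\mu(X+Y)+(1-\mu)Y=\mu X+Y$, as required.

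For the ``in particular'' assertion I would observe that $\B$ is both closed and monotone (closure preserves monotonicity in $\Li$, and each $\A_\lambda$ is monotone by hypothesis), so the set of constants lying in $\B$ is a closed half-line of the form $[\p_\B(0),\infty)$. Hence any $c\geq\p_\B(0)$ belongs to $\B$, and applying the first part with $Y=c$ together with translation invariance of $\pa$ gives $\p_c(X)=\pa(X+c)=\pa(X)-c$, which is therefore star-shaped.

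The hardest part will be managing the two layers of closure: $Y$ and $X+Y$ are only available in closures, while convexity can be invoked only inside a single $\A_{\lambda_n}$. The inclusion $\B\subseteq\cap_\Lambda\cl\A_\lambda$ together with the double approximation (first in $k$, then in $n$) is the device that sidesteps this obstruction; verifying that the resulting diagonal sequence does converge in $\Li$-norm to $\mu X+Y$ is the one calculation I would need to carry out carefully.
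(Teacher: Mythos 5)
Your proposal is correct and follows essentially the same route as the paper: show the translated acceptance set $\A-Y$ is star-shaped at $0$ via convexity of the individual $\A_\lambda$, identify $\p_Y$ with $\p_{\A-Y}$, transfer star-shapedness through \Cref{basic props}, and use closedness plus monotonicity of $\B\cap\R$ to get $c\in\B$ for the constant case. Your extra care with closures (the inclusion $\B\subseteq\cap_{\Lambda}\cl\A_\lambda$ and the diagonal approximation) in fact tightens the paper's step ``$k\A+(1-k)\B\subset\A$'', which strictly speaking only lands in $\cl\A$; since $\A_{\pa}=\cl\A$, both arguments reach the same conclusion.
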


\begin{proof}
Firstly, note that $\B$ is a convex set, and $X \in \A$ or $X\in\B$ implies that $ X$ is contained in $\cl (\A_\lambda)$ for some $\lambda$. Thus, convexity of $\cl (\A_\lambda) $ yields that $k\A+(1-k)\B\subset\A$ for any $k \in (0,1)$, which is star-shapeness of $\A$ at $\B$. Note that if $\B=\emptyset$, then the claim is vacuously true. Furthermore, the set $ \A_Y = \A - Y = \{ X-Y : X \in \A\}$, with $Y \in  \B$, is star-shaped at $0$. In order to verify this claim note that for any $X \in \A_Y$ and $Y \in \B$ it follows that $ X + Y \in \A$. Star-shapedness of $\A$ at $Y$ implies that, for any $k \in (0,1)$, $k (X+Y) + (1-k)Y =kX+Y\in \A$. This is equivalent to  $kX \in \A_Y$, which is star-shapedness at $0$. In addition, we have that
\begin{align*}
 \p_{\A_Y} (X) &= \inf \{m \in \R: X+m \in \A_Y  \} 
 \\ &=  \inf \{m \in \R: X +m\in \A-Y  \}
 \\ & =\inf \{m \in \R: X +m+Y\in \A  \} 
 \\ &= \pa(X+Y) =\p_Y (X).
\end{align*} Thus,  by \Cref{basic props} we conclude that $\p_Y$ is star-shaped. To see that $\p$ is star-shaped, first note that $\p_\B(0) = \inf\{ m \in \R :m \in \B\} = \inf \{ m \in \R \cap \B\}= \min \{ m \in \R \cap \B\}$. The minimum is attained because $\B \cap \R$ is closed and bounded below. Thus, by Monotonicity, if $c\in\R$ and $c\geq \p_\B (0)$, $c\in\B$. 
Hence, by making $Y=c$ and considering the previous reasoning for $\p_Y$, we have that $\p_{c}$ is star-shaped. Since as all those risk measures  are monetary, we get for any $X\in\X$ that $ \p_{c}(X)=\pa(X+c)=\p(X)$. 
\end{proof}

\begin{propo}\label{rema}
	Let $\Lambda$ be a family of monetary acceptance sets, with their respective monetary risk measures, and $\B:= \cap_{ \Lambda} \A_\lambda$. Then $\B\cap\R\not=\emptyset$ if and only if $\B \neq \emptyset$ if and only if $\sup_{ \Lambda} \p_\lambda (0) =\p_\B(0)< \infty$. 
\end{propo}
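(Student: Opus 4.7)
The plan is to establish a three-way equivalence in cyclic form, $(\B \cap \R \neq \emptyset) \Rightarrow (\B \neq \emptyset) \Rightarrow (\sup_\Lambda \p_\lambda(0) < \infty) \Rightarrow (\B \cap \R \neq \emptyset)$, while recording the identification $\p_\B(0) = \sup_\Lambda \p_\lambda(0)$ as a byproduct whenever the common condition holds. One direction between $\B$ and $\B \cap \R$ is trivial. For the converse, starting from any $X \in \B$ I would use $X \in \X = \Li$ to set $c := \esssup X \in \R$, and combine $X \leq c$ with monotonicity of each $\A_\lambda$ to conclude $c \in \A_\lambda$ for every $\lambda$, so that $c \in \B \cap \R$.

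Assuming next that $\B \cap \R \neq \emptyset$ and picking some $c \in \B \cap \R$, membership $c \in \A_\lambda$ for every $\lambda$ directly yields $\p_\lambda(0) \leq c$, hence $\sup_\Lambda \p_\lambda(0) \leq c < \infty$. For the inequality $\p_\B(0) \geq \sup_\Lambda \p_\lambda(0)$, I would observe that every $m \in \B \cap \R$ belongs to every $\A_\lambda$ and therefore satisfies $m \geq \p_\lambda(0)$ for all $\lambda$; passing to the infimum over such $m$ yields the claim.

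The reverse inequality $\p_\B(0) \leq \sup_\Lambda \p_\lambda(0)$, which I anticipate to be the single delicate step because it must convert the infimum defining each $\p_\lambda(0)$ into simultaneous membership in all the $\A_\lambda$, is the one to handle with care. Writing $s := \sup_\Lambda \p_\lambda(0)$ and fixing $\epsilon > 0$, for each $\lambda$ the strict inequality $s + \epsilon > \p_\lambda(0) = \inf\{m \in \R : m \in \A_\lambda\}$ produces some $m_\lambda \in \A_\lambda$ with $m_\lambda < s + \epsilon$; monotonicity of $\A_\lambda$ then forces $s + \epsilon \in \A_\lambda$. Since this holds uniformly in $\lambda$, one has $s + \epsilon \in \B$, giving $\p_\B(0) \leq s + \epsilon$ and, letting $\epsilon \downarrow 0$, the desired inequality. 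Exactly the same construction closes the cycle: if $\sup_\Lambda \p_\lambda(0) < \infty$, picking any $c$ strictly above this supremum places $c$ in every $\A_\lambda$ and hence in $\B$, so $\B \neq \emptyset$.
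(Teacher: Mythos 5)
Your proof is correct and follows essentially the same route as the paper: the $\esssup$ argument to pass from $\B\neq\emptyset$ to a constant in $\B$, monotonicity to move constants upward, and the identification $\p_\B(0)=\sup_\Lambda\p_\lambda(0)$. The only difference is that you prove this last identity carefully via the two-sided $\epsilon$-argument, whereas the paper simply asserts it; your added detail is a faithful justification of that step, not a different approach.
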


\begin{proof}
 Since we are on $\X$ and the sets in $\Lambda$ are monotone,  $X \in  \B$ implies $\esssup X \in \B$. In other words,  $\B$ contains a constant, which is equivalent to $\B\cap\R\not=\emptyset$ if and only if $\B\not=\emptyset$. This fact leads to
\begin{align*}
	\sup_{ \Lambda} \p_\lambda (0) 
	= \inf \left\lbrace m \in \R : m \in \B\right\rbrace \leq \esssup X < \infty.
\end{align*} For the converse, if $\sup_{ \Lambda} \p_\lambda (0) < \infty$, then we get that \begin{align*}
	\infty>\sup_{ \Lambda} \p_\lambda (0) 
	= \inf \left\lbrace m \in \R : m \in \B\right\rbrace .
\end{align*} By monotonicity of $\B$, we have that $m\in\B\cap\R$ for any $m>\p_\B(0)$. Hence, $\B\not=\emptyset$.
\end{proof}

\begin{propo}\label{propo from star}
Let $\p$ be monetary. If $\A_\p$ is star-shaped at $Y\in\X$, then  $\A_\p = \A := \cup_{ \Lambda} \A_\lambda$ for a family $\Lambda$ of convex acceptance sets with $\B:= \cap_{ \Lambda} \A_\lambda\not=\emptyset$. Furthermore, $\p = \min_{ \Lambda } \p_{\A_\lambda}$, where each $\p_{\A_\lambda}$ is a convex risk measure.
\end{propo}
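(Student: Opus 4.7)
The plan is to construct, for each $X \in \A_\p$, a convex monetary acceptance set $\A_X \subseteq \A_\p$ that contains $X$ and that has the fixed element $Y$ as a common point of the whole family. The construction is dictated by the hypothesis that $\A_\p$ is star-shaped at $Y$: define
\[
\A_X := \{k X + (1-k) Y + Z : k \in [0,1],\ Z \in \X_+\},
\]
where $\X_+$ is the cone of $\Pro$-a.s.\ nonnegative elements of $\Li$. Geometrically this is the monotone hull of the closed segment $[Y,X]$; as such it is convex (Minkowski sum of a convex segment and a convex cone), monotone, and monetary, since the real constants lying in $\A_X$ satisfy $\inf_{k\in[0,1]}\esssup(kX+(1-k)Y) \geq -\max\{\|X\|_\infty,\|Y\|_\infty\}$.

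A preliminary observation is that $Y \in \A_\p$, which the hypothesis does not state explicitly. Since $\p$ is monetary on $\Li$ it is Lipschitz, so $\A_\p$ is norm-closed; picking any constant $c \geq \p(0)$ gives $c \in \A_\p$, and star-shapedness at $Y$ then yields $\lambda c + (1-\lambda)Y \in \A_\p$ for all $\lambda \in (0,1)$; letting $\lambda \to 0^+$ and invoking closedness places $Y$ in $\A_\p$. The inclusion $\A_X \subseteq \A_\p$ then splits by cases: for $k \in (0,1)$ star-shapedness at $Y$ gives $kX+(1-k)Y \in \A_\p$; for $k=0$ or $k=1$ the segment element is $Y$ or $X$, both in $\A_\p$; in all three cases monotonicity of $\A_\p$ absorbs the nonnegative perturbation $Z$. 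Conversely $X \in \A_X$ (take $k=1,\ Z=0$), so $\A_\p = \bigcup_{X \in \A_\p} \A_X$, and $Y \in \A_X$ (take $k=0,\ Z=0$) for every $X$, hence $Y \in \B$ and $\B \neq \emptyset$.

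For the representation formula, \Cref{basic props} guarantees each $\p_{\A_X}$ is a convex risk measure, and the standard identity $\p_{\cup_\Lambda \A_\lambda}(W) = \inf_\Lambda \p_{\A_\lambda}(W)$ combined with $\p = \p_{\A_\p}$ yields $\p = \inf_\Lambda \p_{\A_\lambda}$. To upgrade the infimum to a minimum, fix $W \in \X$ and set $m_0 := \p(W)$; closedness of $\A_\p$ gives $W + m_0 \in \A_\p$, so $W + m_0 \in \A_{W+m_0}$ trivially, which forces $\p_{\A_{W+m_0}}(W) \leq m_0$; the reverse inequality $\p_{\A_{W+m_0}} \geq \p_{\A_\p} = \p$ comes for free from $\A_{W+m_0} \subseteq \A_\p$, establishing attainment.

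The only genuinely delicate step is finding the right building block: once one hits on the monotone hull of the segment $[Y,X]$ the rest is mechanical, with the one subtle auxiliary point being the closedness-based limiting argument that promotes star-shapedness \emph{at} $Y$ into membership \emph{of} $Y$ in $\A_\p$, which is indispensable for the $k=0$ end of the segment to behave.
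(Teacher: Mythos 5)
Your proof is correct and follows essentially the same route as the paper: your building block, the monotone hull of the segment $[Y,X]$, is exactly the paper's $\operatorname{conv}(\A_X \cup \A_Y)$ with $\A_Z := \{W \in \X : W \geq Z\}$, since the convex hull of a union of two convex sets is the set of two-term convex combinations; the indexing by elements of $\A_\p$, the membership $Y \in \B$, and the attainment argument via the set indexed by $W + \p(W)$ all match the paper's proof. Your explicit limiting argument showing $Y \in \A_\p$ (via closedness of $\A_\p$) is the same device the paper uses to conclude $\A_Y \subset \A_\p$, just spelled out more carefully.
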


\begin{proof}
The fact that $\A_\p = \A$ is direct from \Cref{teocast}.  We need to show that there exists a family of such sets with a non-empty intersection. For each $Z \in \X$, let $\A_Z := \{ X \in \X : X \geq Z\}$. Note that each of such sets is  convex and monetary. Additionally, $\A_\p = \bigcup_{Z \in \A_\p} \A_Z$.  As $\A_\p$ is star-shaped at $Y$ and monotone, it is also star-shaped at $\A_Y$. Thus, since $\A_\p$ is closed we have $\A_Y\subset\A_\p$. Furthermore,  any member of $\az $ is contained in $\Ap$. Thus, we have that $ \text{conv}(\A_Z \cup \A_Y) \subset \A_\p$ for all $Z \in \Ap$. Now, let $\Lambda := \{  \text{conv}(\A_Z \cup \A_Y) : Z \in \Ap \}$. Clearly, $\A_\p = \A:= \cup_{ \Lambda} \A_{\lambda }$ and $Y \in \B := \cap_{ \Lambda} \A_{ \lambda} \neq \emptyset$. Thus, the fact that $\p = \inf_{ \Lambda } \p_{\A_\lambda}$ is directly obtained. Furthermore, it is straightforward from \Cref{basic props} that each $ \p_{\A_\lambda}$ is a convex risk measure. To show that the minimum is attained, firstly note that, for any $X\in\X$, $X + \p(X) \in \Ap$  and  $\A_{X+\p(X)} \subseteq    \text{conv} ( \A_{ X+\p(X)} \cup \A_Y) =: \A^* \in \Lambda $. Then, we have that 
\begin{align*}
 \p_{ \A^*}  (X) &= \inf \{ m \in \R: X+ m \in  \A^* \}  
 \\ &\leq \inf \left\lbrace  m \in \R: X+ m \in {\A_{X+\p(X)}} \right\rbrace   
 \\ &= \inf \{ m \in \R: X+ m \geq X + \p(X) \}  
 \\ &= \p(X).
\end{align*}
On the other hand, we have  $\p_{ \A^*}(X)\geq\inf_{\Lambda } \p_{\A_\lambda} (X)\geq\p(X)$. Hence, $\p_{ \A^*}(X)=\p(X)$, which implies that the minimum is attained. 
\end{proof}

We are now in condition to prove our main result, which is the following Theorem below. Such a claim unifies the discussion above by exposing the interplay between monetary and star-shaped risk measures.

\begin{theorem}\label{main theo}

Let $\p$ be a monetary risk measure. Then  $\p-c$ is star-shaped for some $c\in\R$  if and only if \[\p(X)=\min_{ \Lambda} \p_\lambda (X) ,\:\forall\:X\in\X,\] where $\Lambda$ is a family of convex risk measures such that $\sup_{ \Lambda} \p_\lambda (0) < \infty$. In this case, we can take any $c\geq\sup_{ \Lambda} \p_\lambda (0)$.

\end{theorem}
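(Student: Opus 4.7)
The plan is to reduce both directions of \Cref{main theo} to the three preceding propositions \Cref{prop1}, \Cref{rema}, and \Cref{propo from star}. The connecting observation is that whenever $\rho = \min_\Lambda \rho_\lambda$ pointwise, the acceptance sets satisfy $\A_\rho = \cup_\Lambda \A_{\rho_\lambda}$, since $\rho(X) \le 0$ iff some $\rho_\lambda(X) \le 0$. This bridge lets me shift freely between the minimum-representation statement in the theorem and the acceptance-set statements where the hard work has already been done.

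For the sufficiency direction I would start from $\rho = \min_\Lambda \rho_\lambda$ with $\sup_\Lambda \rho_\lambda(0) < \infty$, set $\A_\lambda := \A_{\rho_\lambda}$ (convex and monetary by \Cref{basic props}), and introduce $\A := \cup_\Lambda \A_\lambda$ and $\B := \cap_\Lambda \A_\lambda$. The connecting observation yields $\A_\rho = \A$, so $\rho = \rho_\A$ by \Cref{basic props}, while \Cref{rema} gives $\B \neq \emptyset$ and $\rho_\B(0) = \sup_\Lambda \rho_\lambda(0)$. Each $\A_\lambda$ is closed in $L^\infty$ (by the remark on Lipschitz continuity of monetary risk measures), hence so is $\B$, which then coincides with the closure appearing in \Cref{prop1}; applying that proposition yields that $\rho - c = \rho_\A - c$ is star-shaped for every $c \ge \rho_\B(0) = \sup_\Lambda \rho_\lambda(0)$, simultaneously handling this implication and the final ``In this case'' clause.

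For the necessity direction, suppose $\rho - c$ is star-shaped for some $c \in \R$. Translation preserves monotonicity and translation invariance, so $\rho - c$ is still a monetary risk measure, and \Cref{basic props} makes $\A_{\rho - c}$ star-shaped at $0$. \Cref{propo from star} then furnishes a family $\{\A_\lambda\}_\Lambda$ of convex acceptance sets with $\cap_\Lambda \A_\lambda \neq \emptyset$ and a representation $\rho - c = \min_\Lambda \rho_{\A_\lambda}$, each $\rho_{\A_\lambda}$ a convex risk measure. Setting $\rho_\lambda := \rho_{\A_\lambda} + c$ preserves convexity, monotonicity, and translation invariance, so $\rho_\lambda$ is a convex risk measure, $\rho = \min_\Lambda \rho_\lambda$, and $\sup_\Lambda \rho_\lambda(0) = c + \sup_\Lambda \rho_{\A_\lambda}(0) < \infty$ by \Cref{rema} applied to $\{\A_\lambda\}$.

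The main obstacle is bookkeeping rather than mathematics: keeping straight which family of acceptance sets is in play at each step, matching the translation constant $c$ consistently against $\sup_\Lambda \rho_\lambda(0)$ and $\rho_\B(0)$, and verifying that the closure appearing in \Cref{prop1} is innocuous here because our acceptance sets are already closed in $L^\infty$. All the genuine content lives inside the three preceding propositions, and \Cref{main theo} is essentially their unified repackaging.
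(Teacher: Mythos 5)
Your proposal is correct and follows essentially the same route as the paper: both directions are reduced to \Cref{prop1}, \Cref{rema}, and \Cref{propo from star}, with the identification $\A_\rho=\cup_\Lambda\A_{\rho_\lambda}$ serving as the bridge. The only cosmetic difference is in the necessity direction, where the paper applies \Cref{propo from star} directly to $\p$ (whose acceptance set is star-shaped at the point $c$), while you apply it to $\p-c$ (star-shaped at $0$) and translate the resulting family back by $c$; both are valid and your version makes explicit some bookkeeping the paper leaves implicit.
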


\begin{proof}
 For the if part, let $c \in\left( \cap_{\Lambda}\A_\lambda\right) \cap\R$. Note that such $c$ exists by \Cref{rema} since $\p_{ \cap_{\Lambda} \A_\lambda} (0)<\infty$. Thus, \Cref{prop1} implies that $\p-c$ is star-shaped.
For the only if part, as $\p-c$ is star-shaped, we have by \Cref{basic props} that
$\A_{\p-c} = \A_\p - c$ is also star-shaped and $\A_\p $ is star-shaped at $c$. Then, the representation as a minimum over $\Lambda$ is given by \Cref{propo from star}. This implies $\A_{\p-c}=\cup_{\Lambda}\A_\lambda$.
  By \Cref{rema},  the intersection of the acceptance sets in $\Lambda$ contains some constant, which implies in $\sup_{\Lambda} \p_\lambda (0) =  \p_{ \cap_{\Lambda} \A_\lambda} (0) < \infty$. Finally, by \Cref{prop1} we can take any $c \geq\p_{\cl (\cap_{\Lambda}\A_\lambda) }(0)=\p_{\cap_{\Lambda}\A_\lambda}(0)=\sup_{ \Lambda} \p_\lambda (0)$.
\end{proof}

We end this letter with an example that illustrates that not every monetary risk measure is at a translation away from star-shapedness, reinforcing the importance for the condition in \Cref{main theo}.

\begin{ex}
Let $f : \R \rightarrow \mathbb{Z}$ be the floor function defined as  $f(c) = \sup\{ m \in \mathbb{Z} : m \leq c \} $, and $\p^*$ a convex risk measure. Then $\p (X) := \p^* (f(X))$ is not star-shaped  under any translation. To verify that, note that for any constant $X$ we have $\p(X) =  \p^* (0)-f(X)$. Then, to show that it is not star-shaped under any translation is equivalent to show that for all $k \in \R$, which is the candidate for translation, there is some $\lambda \in (0,1)$ and some constant $X$ such that
\[ \dfrac{f(\lambda X) - \lambda f(X) }{1-\lambda} < k+\p^*(0). \]
By choosing $X=1$  and taking $\lambda \rightarrow 1^-$ yields,
\[ \lim_{\lambda \rightarrow 1^-} \dfrac{f(\lambda ) - \lambda f(1) }{1-\lambda} = \lim_{\lambda \rightarrow 1^-} \dfrac{ - \lambda  }{1-\lambda}  = -\infty.  \]
Hence, for any $k \in \R$ we can find a $\lambda \in (0,1)$ such that $ \dfrac{f(\lambda 1) - \lambda f(1) }{1-\lambda} < k+\p^*(0)$. Importantly, one should note that $\p$ is not translation invariant. However, $\p_\Ap$ is, and $\Ap = \A_{\p_\Ap}$. Therefore, $\p_\Ap$ is monetary, but not star-shaped risk measure.
 From this,  \Cref{main theo} implies that for any family $\Lambda$ of convex sets such that $\A_\p = \cup_{\Lambda} \A_\lambda$, it follows that $ \cap_{  \Lambda} \A_\lambda = \emptyset$. In order to complement this reasoning, in \Cref{fig} we expose the acceptance set of some coherent risk measure $\p^*$  in red, and  the acceptance set of $\p (X) := \p^* (f(X))$ in blue. It is visually clear that it does not matter how much we shift $\A_\p$, which translates $\p$, it  will never be star-shaped.

\begin{figure}[H]
\caption{In red is the  acceptance set of some coherent risk measure and in blue is the acceptance set of the coherent risk measure composed with the floor function.
}\label{fig}
\begin{center}
\begin{tikzpicture}[scale=1]

\clip (0,0) (-4,-2)rectangle (4,4);

\draw[thin,<->] (-4,0) -- (4,0);
\draw[thin,<->] (0,-2) -- (0,4);

\draw[ultra thick,smooth,color=red!70!black,domain=-4:0] plot (\x, { -(\x)}) ;
\draw[ultra thick,smooth,color=red!70!black,domain=0:4] plot (\x, { -(0.5*\x)}) ;

\draw[ultra thick,color=Blue!70!black] (-4,3.98)--(-3,3.98)--(-3,3)--(-2,3)--(-2,2)--(-1,2)--(-1,1)--(0,1)--(0,0)--(2,0)--(2,-1)--(3.98,-1)--(3.98,-2);

\fill[color=red, opacity=.2, smooth](-4,4)--(0,0)--(4,-2)--(4,4) ;

\fill[color=Blue, opacity=.35, smooth] (-4,4)--(-3,4)--(-3,3)--(-2,3)--(-2,2)--(-1,2)--(-1,1)--(0,1)--(0,0)--(2,0)--(2,-1)--(4,-1)--(4,-2)--(4,4);

\fill (0,0) circle [radius=0.03125cm] node[anchor=north west,scale=.8]{$0$};

\end{tikzpicture}
\end{center}
\end{figure}
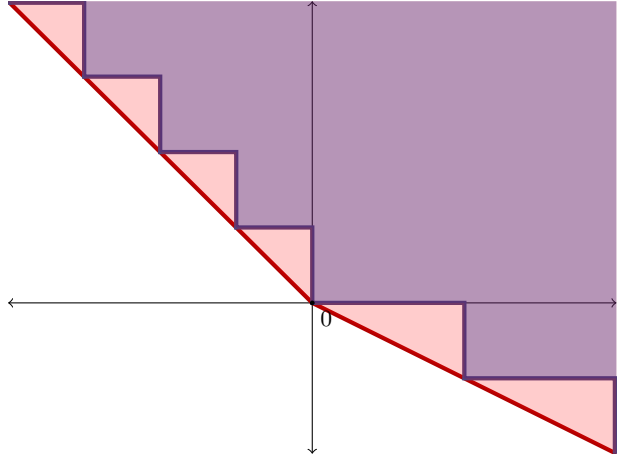

\end{ex}

\bibliography{ref}
\bibliographystyle{apa}
\end{document}